\documentclass[aps,prl,showpacs,notitlepage,twocolumn,superscriptaddress,nofootinbib,preprintnumbers]{revtex4-2}
\usepackage{bbm}
\usepackage{mathrsfs}
\usepackage{epsfig}
\usepackage{soul,xcolor}
\usepackage{graphicx}
\usepackage{amsfonts}
\usepackage{amsthm}
\usepackage[figuresright]{rotating}
\usepackage{amssymb}
\usepackage{amsmath}
\usepackage{dcolumn}
\usepackage{physics}
\usepackage{bm}
\usepackage{verbatim}
\usepackage{braket}
\usepackage[normalem]{ulem}
\usepackage[ruled,vlined,linesnumbered]{algorithm2e}
\usepackage{setspace}
\usepackage[colorlinks,linkcolor=blue,anchorcolor=blue,citecolor=blue,urlcolor=blue]{hyperref}
\newtheorem{theorem}{Theorem}

\begin{document}

\title{Flow-based Phase-space Tomography of Continuous-variable Quantum States}

\author{Owen Dugan}
\affiliation{Department of Physics, Massachusetts Institute of Technology}
\affiliation{Department of Computer Science, Stanford University, Stanford, CA, USA}

\author{Rumen Dangovski}
\affiliation{Department of Physics, Massachusetts Institute of Technology}

\author{Peter Y. Lu}
\affiliation{Department of Physics, Massachusetts Institute of Technology}
\affiliation{Department of Electrical and Computer Engineering, Tufts University, Medford, MA}

\author{Di Luo}
\thanks{diluo@tsinghua.edu.cn}
\affiliation{Department of Physics, Massachusetts Institute of Technology}
\affiliation{Department of Electrical and Computer Engineering, University of California, Los Angeles, CA 90095, USA}
\affiliation{Department of Physics, Tsinghua University, Beijing 100084, China}

\date{\today}

\begin{abstract}
Continuous-variable quantum state tomography is limited by the cost of resolving non-Gaussian structure in high-dimensional phase space. We introduce \emph{QST-Flow}, a quantum state tomography framework via flow-based generative modeling that represents experimentally accessible phase-space quasiprobability distributions with normalized, samplable neural densities rather than a truncated density matrix. The framework has two variants: \emph{QST-QFlow} models the positive Husimi-$Q$ function with a single normalizing flow, while \emph{QST-WFlow} models sign-changing Wigner functions as a trainable difference of two normalized flows. This construction preserves quasiprobability normalization and enables exact density evaluation, direct sampling, and importance-sampled learning from finite phase-space measurements without a fixed grid. Benchmarks on non-Gaussian cat, binomial, Gottesman-Kitaev-Preskill, number, and Fock states show accurate single-mode reconstructions, extension to multimode states, robustness on noisy Wigner data, and improved reconstruction error compared with prior machine-learning tomography methods. QST-Flow opens a promising route toward scalable, measurement-efficient phase-space tomography of nonclassical bosonic systems.
\end{abstract}
\maketitle

\textit{Introduction---.} Continuous-variable (CV) bosonic systems provide a natural language for quantum optics, superconducting cavities, sensing, communication, and bosonic error correction~\cite{lvovsky2009continuous}. Their quantum states are most often interrogated through phase-space quasiprobability distributions such as the Husimi $Q$ function~\cite{husimi1940formal} and the Wigner function~\cite{wigner1932quantum}. These representations are experimentally meaningful: the $Q$ function is associated with coherent-state measurements, while the Wigner function can be obtained from homodyne reconstruction or direct displaced-parity measurements~\cite{smithey1993measurement,banaszek1996direct}. They also expose the central difficulty of CV tomography. The relevant Hilbert space is infinite-dimensional, experimentally available data are finite and noisy, and highly nonclassical states such as cat, binomial, and Gottesman-Kitaev-Preskill (GKP) states have fine phase-space structure that becomes increasingly hard to resolve as the number of modes grows~\cite{mirrahimi2014dynamically,michael2016new,gottesman2001encoding}.

Conventional CV tomography commonly reconstructs a density matrix in a truncated Fock basis or a quasiprobability distribution on a phase-space grid~\cite{lvovsky2009continuous}. These approaches remain powerful for low-dimensional experiments, but their cost is set by the chosen cutoff or grid resolution and grows rapidly with the number of modes. The difficulty is especially acute for Wigner tomography, where interference fringes and negative lobes carry nonclassical information~\cite{kenfack2004negativity} but are vulnerable to coarse sampling, loss, and finite-shot noise. Recent learning-based methods have begun to expand the toolkit for CV quantum tomography. CV classical-shadow approaches adapt randomized measurement ideas to bosonic systems and provide sample-complexity guarantees for selected observables and finite-dimensional shadows~\cite{gandhari2024precision,becker2024classical}. Other learning-theoretic and nonparametric approaches estimate characteristic functions, tomograms, or efficient subclasses of CV states with improved scaling under suitable assumptions~\cite{wu2024efficient,markovich2026nonparametric,mele2025learning}. Despite these advances, learning general non-Gaussian and multimode CV states remains challenging because the model must handle infinite-dimensional phase space, finite and noisy measurements, Wigner negativity, and scalable continuous representations.

\begin{figure*}[t]
    \centering
    \includegraphics[width=0.88\linewidth]{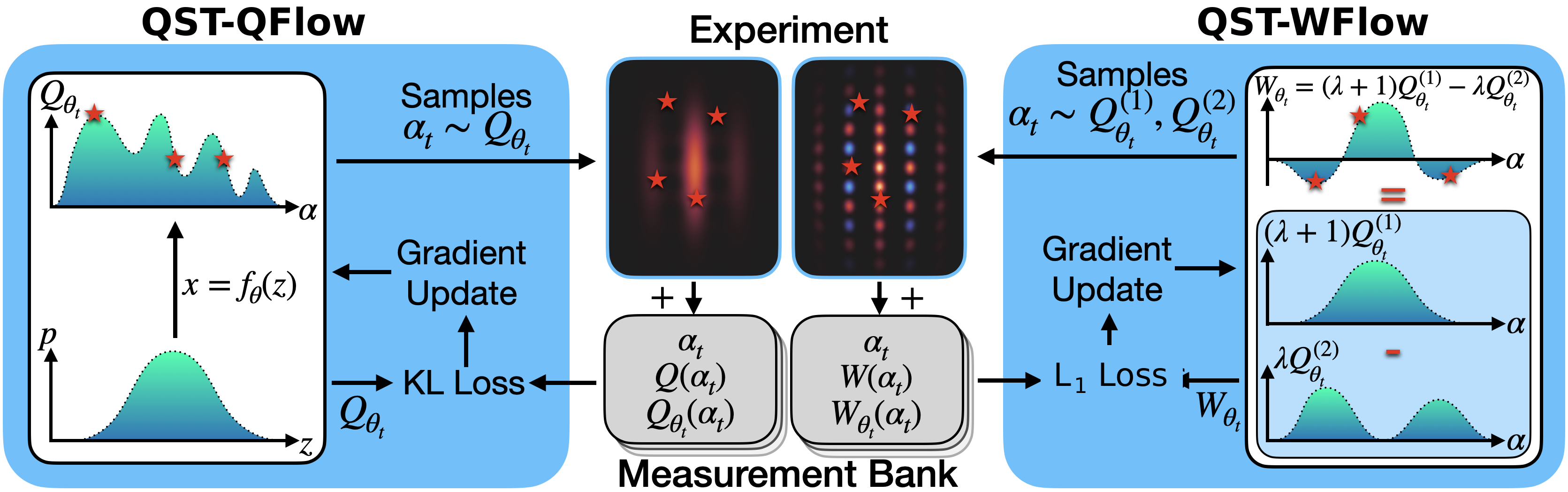}
    \caption{Schematic of QST-QFlow (left) and QST-WFlow (right) tomography for bosonic quantum systems. An experimental state is modeled through a normalizing-flow representation of either the Husimi $Q$ function or the Wigner function. The current flow proposes phase-space points, measurements are acquired at those points, and the model is updated to match the measured phase-space signal.}
    \label{fig:summary}
\end{figure*}

Normalizing flows have recently become an important development in machine learning, with applications across density estimation, simulation, inverse problems, and scientific data analysis. They are especially well suited to the positive part of the CV tomography problem because they map a simple base density to a complex target density through an invertible transformation with a tractable Jacobian. The remaining challenge is to turn this capability into a tomography method that treats both positive $Q$ functions and sign-changing Wigner functions on the same footing. Here we introduce QST-Flow, a unified flow-based generative modeling framework for phase-space tomography of CV quantum states that learns normalized quasiprobability representations directly, without imposing a Fock-basis density-matrix truncation. QST-Flow has two phase-space variants: QST-QFlow, in which the Husimi $Q$ function is represented directly by a normalized flow and trained by sample-based likelihood estimates, and QST-WFlow, in which a Wigner function is represented as a trainable signed combination of two normalized flows that captures negativity while preserving normalization and efficient proposal sampling.

The resulting models provide exact density evaluation, direct sampling, and sample-based inference in continuous phase space. QST-QFlow gives a normalized probability model for Husimi tomography, while QST-WFlow decomposes Wigner functions into the difference of two normalized positive flows, so that negative quasiprobability is modeled explicitly rather than postprocessed. We validate the framework on representative non-Gaussian bosonic states, noisy Wigner data, and multimode states, including a benchmark where QST-WFlow improves on the previous state-of-the-art QST-CGAN~\cite{QST_CGAN} reconstruction. QST-Flow develops an advanced flow-based tomography tool for CV systems, providing a normalized and samplable phase-space representation that reaches beyond single-mode demonstrations and the constraints of conventional grid- or truncation-based methods.

\textit{Phase-space measurements---.} We focus on two experimentally accessible phase-space representations that fully determine a CV quantum state: the Husimi $Q$ function and the Wigner function. The $Q$ function is positive and can therefore be treated as a probability density, while the Wigner function contains the sign changes and negativity that witness nonclassicality~\cite{kenfack2004negativity}. This distinction is the motivation for using a single normalized flow for $Q$ tomography and a signed pair of flows for Wigner tomography. 

First consider the $Q$-function measurement,

\begin{equation}
Q[\rho](\alpha) = \frac{1}{\pi} \text{Tr} \left[ \rho |\alpha\rangle \langle \alpha| \right] = \frac{1}{\pi} \langle \alpha | \rho | \alpha \rangle, 
\end{equation}

which can also be written as

\begin{equation}
Q[\rho](\alpha) = \frac{1}{\pi} \langle 0 | D(-\alpha) \rho D(\alpha) | 0 \rangle
\end{equation}

The \(Q\) function is thus the expectation value of the vacuum projector, \(|0\rangle \langle 0|\), after displacing the field by \(-\alpha\). It is directly measurable, non-negative, normalized, and bounded by $1/\pi$. Physically, this distribution is a Gaussian-smoothed version of the Wigner function and therefore suppresses sub-Planck interference fringes while retaining tomographic information through the overcomplete coherent-state POVM (positive operator-valued measure)~\cite{nielsen2010quantum}. With a finite number of experimental shots, each sampled value has binomial measurement statistics.

Because the $Q$ function corresponds to a coherent-state POVM, the classical square-root overlap of two $Q$ functions bounds the quantum fidelity from above:

\begin{equation}
F(\rho, \sigma) = \left( \text{tr} \sqrt{\sqrt{\rho} \sigma \sqrt{\rho}} \right)^2 \leq \left( \int \sqrt{Q_{\rho} Q_{\sigma}} d \alpha d \alpha^{*} \right)^2
\end{equation}
where $Q_{\sigma}$ denotes the target distribution and $Q_{\rho}$ the flow-based reconstruction. The square-root overlap can be estimated efficiently by sampling from either distribution. We also report the $L_1$ distance $||Q_{\rho}-Q_{\sigma}||$ as a direct phase-space error.

Next consider the Wigner-function measurement. Let \(\mathcal{P}\) be the Hermitian parity operator, which performs reflection about the phase-space origin so that $\mathcal{P} |x\rangle = |-x\rangle, \mathcal{P} |p\rangle = |-p\rangle$. Direct parity-based Wigner measurements have been widely employed because they provide pointwise (local) access to the Wigner function, eliminating the need for the inverse Radon transform required in homodyne tomography~\cite{banaszek1996direct}.

The displaced-parity expression for the Wigner function is
\begin{equation}
W(x, p) = \frac{2}{\pi} \text{Tr}\left[D(-\alpha)\rho D(\alpha)\mathcal{P}\right]
\end{equation}

This expression shows that the Wigner value at $(x,p)$ is obtained by displacing the state by $-\alpha$ and measuring the parity observable $2\mathcal{P}/\pi$. In the Fock basis, the parity operator is $\mathcal{P}=e^{i\pi \hat{a}^\dagger \hat{a}}$, with eigenvalue $+1$ on even photon-number states and $-1$ on odd photon-number states. A finite-shot Wigner measurement is therefore governed by binomial statistics set by the even- and odd-parity weights of $D(-\alpha)\rho D(\alpha)$, and the measured quasiprobability is bounded as $-\frac{2}{\pi} \leq W(x, p) \leq \frac{2}{\pi}$.
The sign of $W(x,p)$ directly reflects this local parity imbalance: negative regions occur when the odd-parity component dominates after displacement, while oscillatory sign changes encode coherence between separated phase-space components. This makes Wigner tomography more sensitive to nonclassical interference than $Q$ tomography, but also more vulnerable to shot noise because a small difference of two parity frequencies must be resolved.

The Wigner-function overlap equals the Hilbert-Schmidt overlap of the corresponding density matrices:

\begin{equation}
    \text{tr} (\rho \sigma) = 2 \pi \int W_{\rho} W_{\sigma} dx dp
\end{equation}
where $W_{\sigma}$ denotes the target and $W_{\rho}$ the flow-based reconstruction. When at least one of the two states is pure, the Hilbert–Schmidt overlap coincides with the squared Uhlmann quantum fidelity. This is the regime used for the state benchmarks below, so the Wigner overlap provides a direct fidelity diagnostic. We also report $||W_{\rho}-W_{\sigma}||_1$ as a direct phase-space error.

\textit{Flow-based phase-space tomography---.} We now describe the two QST-Flow representations, QST-QFlow and QST-WFlow, used for $Q$- and Wigner-function tomography in phase space.

For QST-QFlow, we treat the $Q$ function as a positive probability distribution and develop the flow-based approach of Ref.~\cite{dugan2023q} to quantum state tomography. Normalizing flows transform a simple initial density $p_X$ (typically a unit-normal distribution) to a target density $p_Y$ through an invertible neural map $y=f_\theta(x)$ with $x\sim Q_X$ and $y\sim Q_Y$~\citep{dinh2014nice,rezende2015variational}. The target density is then
\begin{align*}
Q_Y(y) &=Q_X(f_\theta^{-1}(y))\left | \frac{\partial f_\theta^{-1}(y)}{\partial y} \right|.
\end{align*}
This construction gives both normalized probabilities and exact sampling. Possible choices of $f_\theta$ include affine coupling layers (RealNVP)~\citep{dinh2017density}, continuous normalizing flows (CNF)~\citep{grathwohl2018scalable}, and convex potential flows (CP-Flow)~\citep{huang2021convex}. Related universal-approximation results imply that sufficiently expressive flows can represent broad classes of smooth $Q$ functions~\citep{huang2021convex}.

The tomography loop alternates between proposing phase-space points, evaluating the experimental signal at those points, and updating the flow. Training uses an importance-sampled estimate of the KL divergence. Let $B$ be a bank of phase-space samples. Each sample is drawn from a proposal distribution $P$, evaluated with proposal probability $P_\alpha=P(\alpha)$, and stored as $(\alpha,P_\alpha)$. The proposal can be the current flow, a broader initialization distribution, or a mixture of earlier flows; storing $P_\alpha$ keeps the estimator unbiased even as the proposal evolves during training. As shown in the Supplemental Material,
\begin{align*}
    KL(Q_{exp} || Q_{\theta}) &= \int Q_{exp} \ln \frac{Q_{exp}}{Q_{\theta}}\\
    &= \frac{1}{N} \sum_{(\alpha, P_\alpha) \sim B} \frac{Q_{exp}(\alpha)}{P_\alpha} \log \frac{Q_{exp}(\alpha)}{Q_{\theta}(\alpha)},
\end{align*}
where $Q_{\rm exp}$ is the measured $Q$ function, $Q_{\theta}$ is the QST-QFlow model, and $(\alpha,P_\alpha)$ are selected uniformly from the sample bank. This form is useful experimentally because the flow does not require a fixed rectangular grid: samples can be concentrated around peaks, tails, and other regions that dominate either normalization or reconstruction error.

For QST-WFlow, we exploit the fact that any normalized sign-changing phase-space function $W$ can be written as
\begin{equation}
W = (\lambda+1)Q_1-\lambda Q_2
\end{equation}
for normalized non-negative distributions $Q_1$ and $Q_2$. The coefficient $\lambda$ controls the total negative weight: in the exact positive/negative decomposition, $\lambda=\int \max[-W(x,p),0]\,dx\,dp$, and $2\lambda+1=\int |W|$ is the Wigner-function $L_1$ norm. This connects the signed-flow ansatz to the usual volume of Wigner negativity~\cite{kenfack2004negativity}. We therefore parameterize $Q_1$ and $Q_2$ by two normalizing flows and treat $\lambda$ as a trainable parameter constrained to be non-negative. The model is automatically normalized because $(\lambda+1)\int Q_1-\lambda\int Q_2=1$, while sign changes occur wherever the second component dominates the first. To our knowledge, this construction provies the first flow-based generative model to negative quasi-probability distribution beyond the conventional classial machine learning settings.

This decomposition also gives a practical sampling strategy. Although $W$ itself cannot be used as a probability distribution, either $\frac{1}{2}(Q_1+Q_2)$ or a bank of previous such mixtures is a positive proposal with support in the regions where the signed reconstruction can vary. The ansatz is expressive enough to represent Wigner negativity while retaining tractable density evaluation and proposal sampling. It does not, by itself, enforce positivity of the reconstructed density matrix; in this work we evaluate the learned function through experimentally measured phase-space observables and fidelity diagnostics. We train QST-WFlow using an $L_1$ loss estimated by importance sampling,
\begin{align*}
    L_1[W_\theta,W] &= \int |W_\theta - W|\\
    &\approx \frac{1}{N}\sum_{((x,p), P_{(x,p)})\sim B}\frac{|W_\theta(x,p) - W(x,p)|}{P_{(x,p)}},
\end{align*}
where $((x,p),P_{(x,p)})$ are sampled from a proposal bank $B$ (See the Supplementary Material for details).

\begin{figure}[t]
    \centering
    \includegraphics[width=0.95\linewidth]{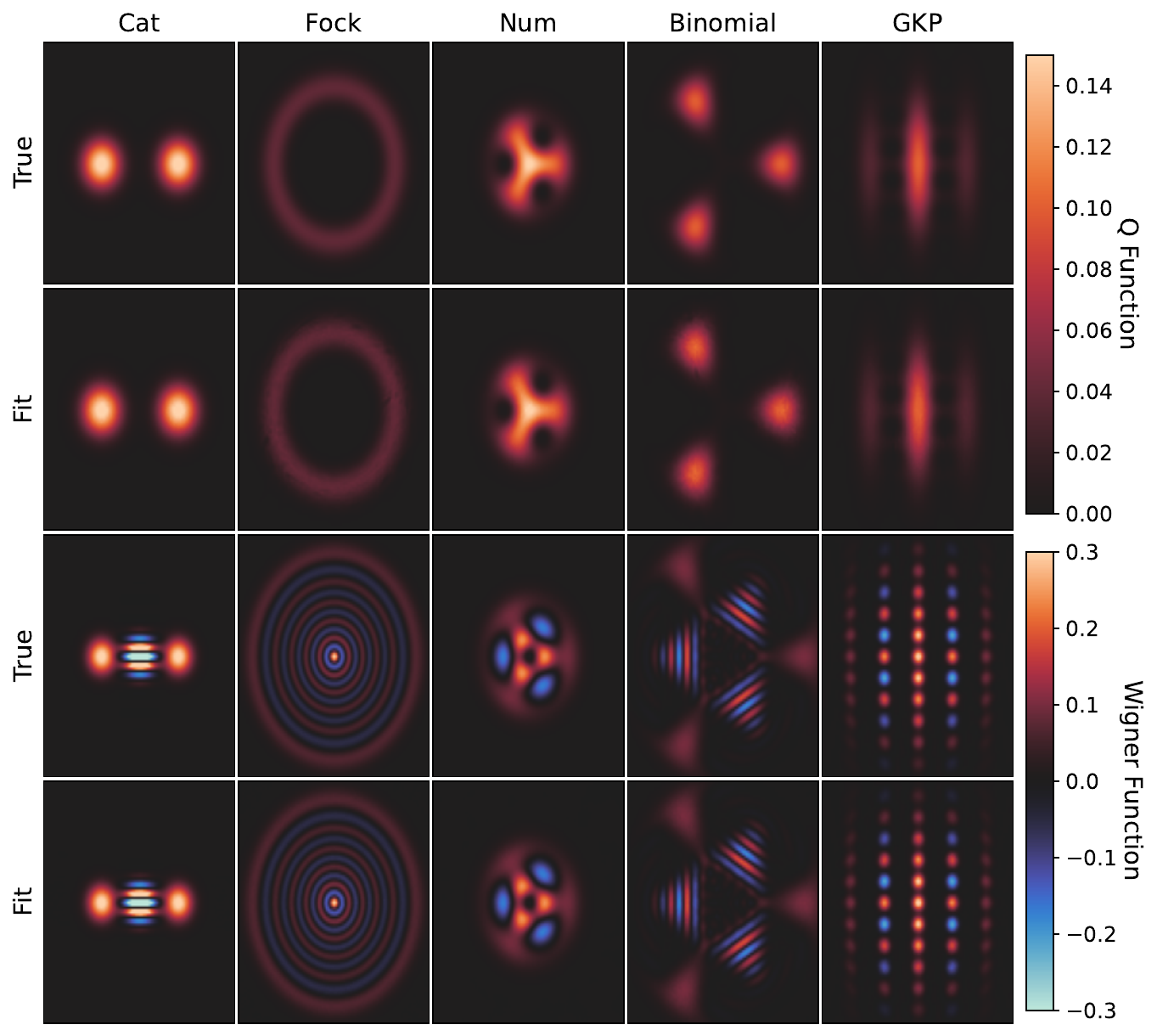}
    \caption{$Q$- and Wigner-function tomography of five single-mode continuous-variable quantum states.}
    \label{fig:1_well}
\end{figure}

\textit{Experiments---.} We first test QST-QFlow and QST-WFlow on single-mode non-Gaussian states, as shown in Fig.~\ref{fig:1_well}. The benchmark set contains the cat state $\ket{\text{cat}(2)}$, the Fock state $\ket{10}_n$, the number-code state $\ket{\text{num}(0)}$, the binomial state $\ket{\text{binom}(5,2)}$, and the GKP state $\ket{\text{GKP}(0.3,20)}$. These examples probe distinct pieces of phase-space physics. Cat states test whether the model resolves interference fringes between separated coherent components~\cite{mirrahimi2014dynamically}. Fock states test rotationally symmetric rings and alternating Wigner signs set by photon-number parity. Binomial and number-code states test sparse superpositions in the Fock basis and are relevant to bosonic error correction~\cite{michael2016new}. GKP states test sharply localized peaks and lattice-scale interference generated by displacement-stabilizer structure~\cite{gottesman2001encoding}. Definitions of the states and architectural details are given in the Supplemental Material. Except for the number of training steps, the same hyperparameters are used across all ten single-mode reconstructions. Most runs use 500 training steps and 5000 samples from the target distribution; the most challenging Wigner reconstructions use up to 4000 steps and at most 40000 samples.

The qualitative difference between the $Q$ and Wigner reconstructions follows the measurement physics above. The $Q$ functions are positive and smoother because coherent-state projection averages over vacuum noise; the corresponding flow mainly needs to learn peak locations, widths, and tails. The Wigner functions contain sharper oscillations and negative lobes, so QST-WFlow must learn cancellations between its two positive components. The signed ansatz is important here: negative regions are not artifacts added after training, but regions where the learned negative component locally outweighs the positive component while the total integral remains fixed.

\begin{figure}[t]
    \centering
    \includegraphics[width=\linewidth]{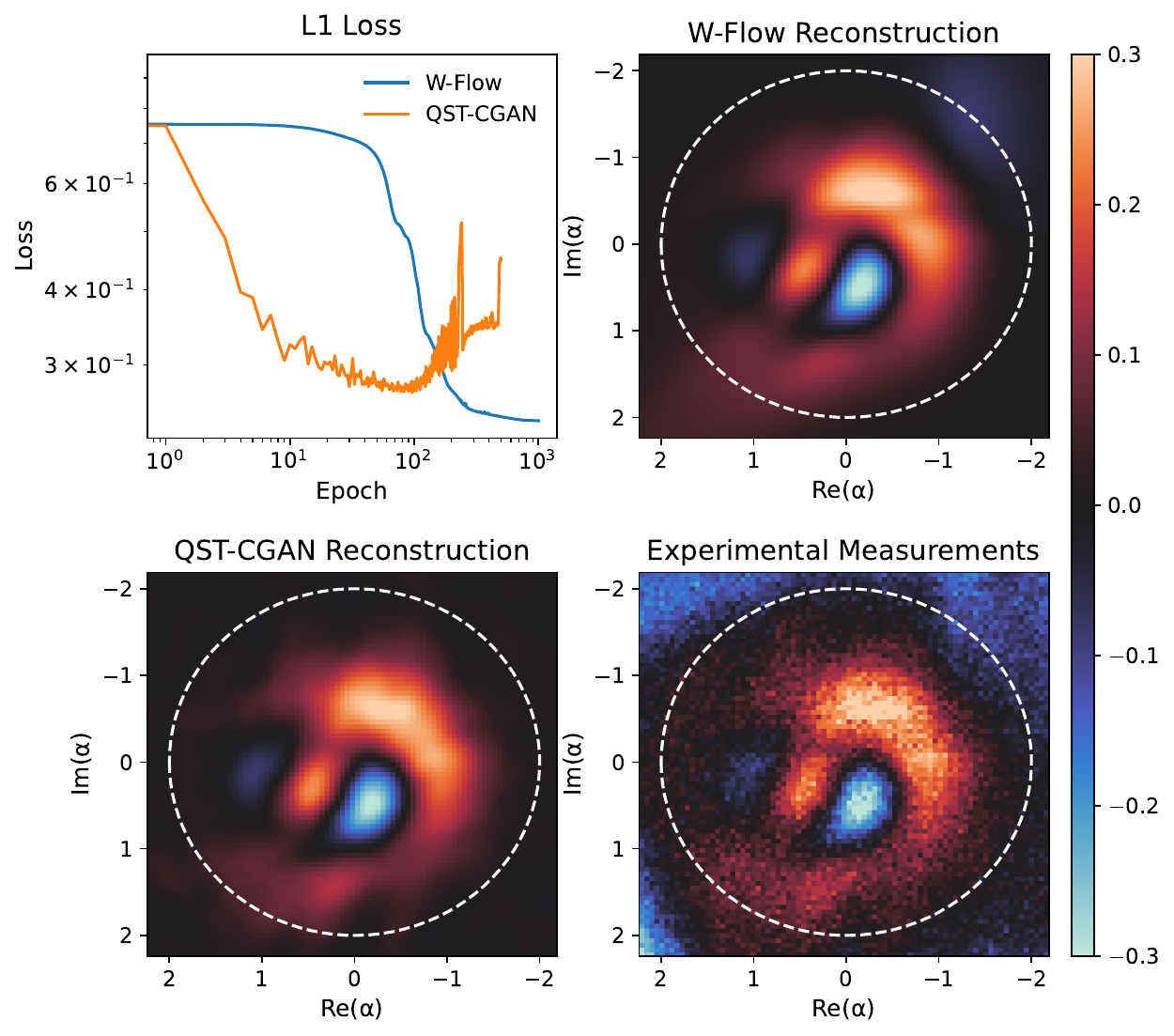}
    \caption{Comparison between QST-WFlow and QST-CGAN on noisy single-mode Wigner-function data.}
    \label{fig:experimental}
\end{figure}

We next compare QST-WFlow with QST-CGAN~\cite{QST_CGAN}, a state-of-the-arts machine learning based CV quantum state tomography approach that outperforms standard maximum-likelihood method. The QST-WFlow model is trained on the same noisy samples used for the QST-CGAN reconstruction. As shown in Fig.~\ref{fig:experimental}, the flow representation gives a lower $L_1$ reconstruction error on this benchmark and more closely follows the measured Wigner-function distribution. The improvement is consistent with the structure of the model: the reconstruction is a normalized signed density rather than an unconstrained image, and the loss is evaluated directly on the measured quasiprobability values. Additional details of the dataset, loss, and training protocol are provided in the Supplemental Material.

Finally, we test scaling beyond a single mode. Multimode CV tomography is difficult for grid-based and density-matrix methods because the phase-space dimension grows linearly with the number of modes while the number of resolved grid points grows exponentially. A flow instead works in the continuous $2M$-dimensional phase space of an $M$-mode system and can draw or score samples without enumerating a tensor-product grid. Figure~\ref{fig:5_well} shows a QST-QFlow reconstruction of the five-mode product state
\[
    \ket{\text{cat}(1.5)}\otimes\ket{2}_n\otimes\ket{\text{num}(0)}\otimes\ket{\text{cat}(0+1i)}\otimes\ket{\text{num}(1)}.
\]
This experiment uses 80000 samples from the target distribution. A ten-mode QST-QFlow reconstruction is shown in the Supplemental Material, together with the corresponding training details. These multimode tests do not remove the fundamental information-theoretic difficulty of reconstructing arbitrary high-dimensional quantum states, but they show that the representation avoids an explicit grid and can exploit sample access to structured CV states.

\begin{figure}[t]
    \centering
    \includegraphics[width=\linewidth]{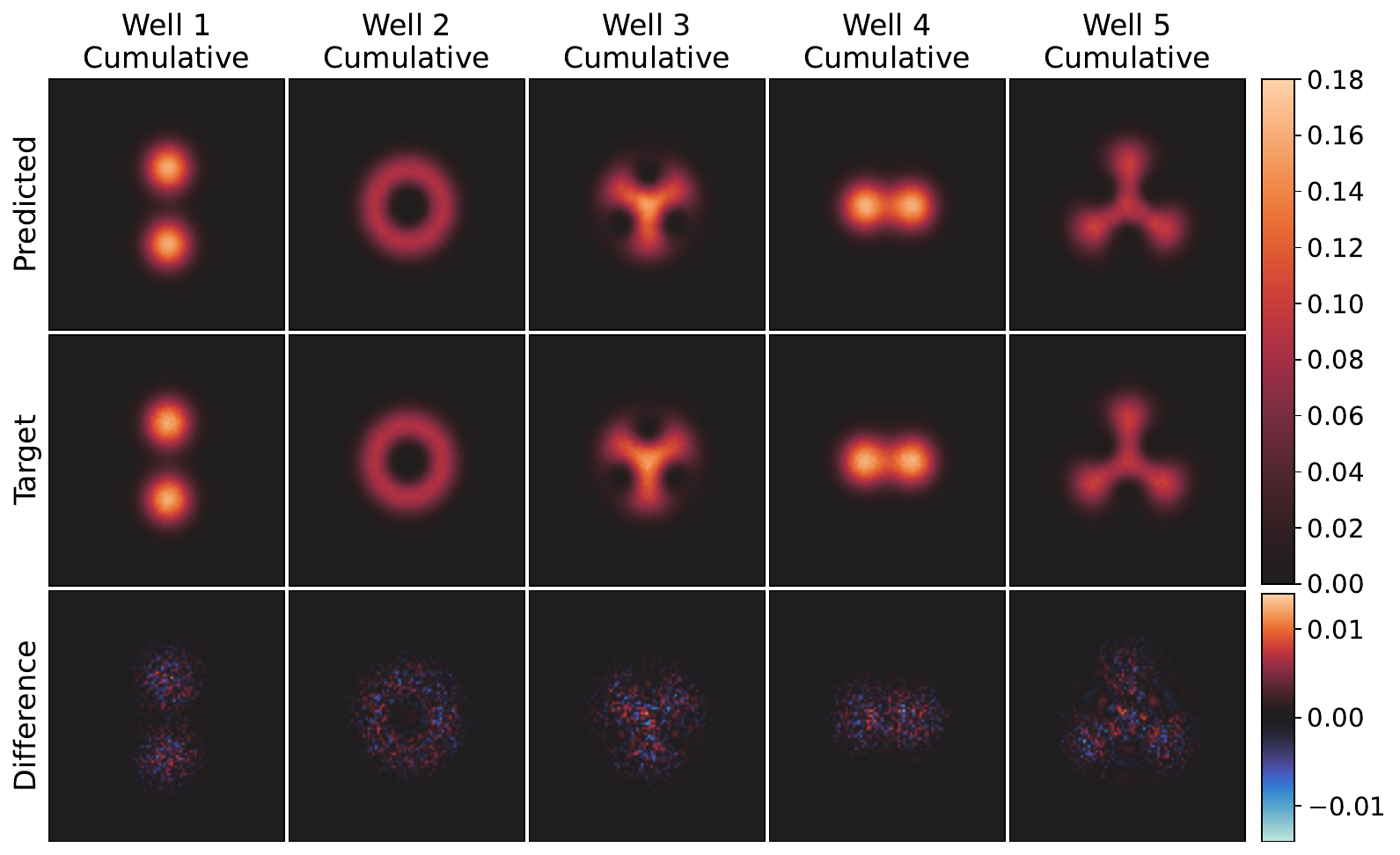}
    \caption{$Q$-function tomography of a five-mode continuous-variable quantum state.}
    \label{fig:5_well}
\end{figure}

\textit{Conclusion---.} We have introduced QST-Flow, a unified flow-based framework for CV quantum state tomography that learns phase-space quasiprobability representations without truncating the density matrix. QST-QFlow treats the Husimi $Q$ function as a positive normalized distribution and uses likelihood-based training with efficient sampling and inference. QST-WFlow treats Wigner tomography as a signed-density learning problem, representing negativity through the difference of two normalized flows while preserving exact quasiprobability normalization and practical proposal sampling. This gives a common continuous phase-space language for smooth positive measurements and highly nonclassical Wigner reconstructions.

The benchmarks show that the method reconstructs representative non-Gaussian bosonic states, extends to multimode states without an explicit tensor-product grid, handles noisy Wigner data, and outperforms the previous QST-CGAN baseline on the noisy Wigner benchmark. Physically, the advantage comes from matching the ansatz to the measurement, where coherent-state $Q$ data are learned as probabilities, while displaced-parity Wigner data are learned as normalized sign-changing quasiprobabilities. More broadly, QST-WFlow generalizes flow-based modeling from ordinary probability densities to normalized quasidistributions, suggesting a route for applying generative modeling ideas to signed and nonclassical data beyond quantum tomography. For future work, useful extensions include enforcing density-matrix physicality within the model, quantifying scaling under realistic loss and detector errors, adding uncertainty estimates for finite-shot parity data, and integrating adaptive sampling with online experimental control.

\textit{Acknowledgments---.} We are grateful to Zhuo Chen for his valuable insights and discussions, which played an important role in the development of this work.DL acknowledges support from Beijing Municipal
Science and Technology Commission and Zhongguancun Science Park Administrative Committee (No.
20251090054).

\bibliographystyle{apsrev4-2}
\bibliography{biblio}

\clearpage
\onecolumngrid

\begin{center}
{\large\bfseries Supplementary Materials on ``Flow-based Phase-space Tomography of Continuous-variable Quantum States''}
\end{center}
\vspace{1em}

\renewcommand\thefigure{S\arabic{figure}}  
\renewcommand\thetable{S\arabic{table}}  
\renewcommand{\theequation}{S\arabic{equation}}
\renewcommand{\theHfigure}{S\arabic{figure}}
\renewcommand{\theHtable}{S\arabic{table}}
\renewcommand{\theHequation}{S\arabic{equation}}
\renewcommand{\thepage}{P\arabic{page}} 
\setcounter{page}{1}
\setcounter{figure}{0}  
\setcounter{table}{0}
\setcounter{equation}{0}

\def\beq{\begin{equation}}
\def\eeq{\end{equation}}

\appendix

\section{Notation}

The Wigner function is a function $W$ satisfying $\int W = 1$. $W$ does not have to be positive. $|W|$ can be treated as an unnormalized probability distribution. Defining \[\widetilde{W} = \frac{W}{\int |W|},\] we have that $|\widetilde{W}|$ is a normalized probability distribution. Further, let $Z_W = \int |W|$.

For a pure state, the Wigner function also satisfies $2\pi\int W^2 = 1$. We use this identity only as a diagnostic; it is not imposed as a hard constraint during training.

\section{Further Details Regarding Sample-Based Loss Estimation}

As described in the main text, the training of QST-QFlow is based on a sample efficient estimate of the KL divergence. Let $B$ be a bank of samples, where at any time we may pick a distribution $P$, sample an $\alpha$ from $P$, compute $P_\alpha = P(\alpha)$, and append $(\alpha, P_\alpha)$ to $B$. We compute the KL divergence in a sample-efficient manner according to
\begin{align*}
    KL(Q_{exp} || Q_{\theta}) &= \int Q_{exp} \ln \frac{Q_{exp}}{Q_{\theta}}\\
    &= \frac{1}{N}\sum_{i} \int P_i \frac{Q_{exp}}{P_i} \ln \frac{Q_{exp}}{Q_{\theta}}\\
    &\approx \frac{1}{N} \sum_{i} \frac{Q_{exp}(\alpha)}{P_i(\alpha)} \log \frac{Q_{exp}(\alpha)}{Q_{\theta}(\alpha)}\hspace{20pt} \alpha\sim P_i\\
    &= \frac{1}{N} \sum_{(\alpha, P_\alpha) \sim B} \frac{Q_{exp}(\alpha)}{P_\alpha} \log \frac{Q_{exp}(\alpha)}{Q_{\theta}(\alpha)},
\end{align*}
where $Q_{exp}$ is the experimental Q function, $Q_{\theta}$ is QST-QFlow, and $(\alpha, P_\alpha)$ are uniform samples chosen from an arbitrary sample bank $B.$

Similarly, we train QST-WFlow using an $L_1$ loss estimated by importance sampling:
\begin{align*}
    L_1[W_\theta,W] &= \int |W_\theta - W|\\
    &= \frac{1}{N}\sum_{i} \int P_i \frac{\int |W_\theta - W|}{P_i}\\
    &\approx \frac{1}{N}\sum_{i}\frac{|W_\theta(x,p) - W(x,p)|}{P_i(x,p)}\hspace{20pt} (x,p)\sim P_i\\
    &= \frac{1}{N}\sum_{((x,p), P_{(x,p)})\sim B}\frac{|W_\theta(x,p) - W(x,p)|}{P_{(x,p)}},
\end{align*}
where $((x,p), P_{(x,p)})$ are samples chosen uniformly from the sample bank $B$.

\section{Signed-flow Representation of Wigner Functions}

\begin{theorem}
    Any normalized integrable phase-space function $W$ can be expressed as \[W = (\lambda+1)Q_1-\lambda Q_2\] for normalized non-negative probability distributions $Q_1$ and $Q_2$.
\end{theorem}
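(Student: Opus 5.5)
The plan is to use the Jordan (positive/negative part) decomposition of $W$. Write $W = W_+ - W_-$ with $W_+ = \max[W,0]$ and $W_- = \max[-W,0]$. Both parts are non-negative and measurable. Since $W$ is integrable, $\int W_\pm \le \int |W| = Z_W < \infty$. Set $\lambda = \int W_-$, which is the same $\lambda$ as in the main text. Normalization $\int W = 1$ then gives $\int W_+ = 1 + \lambda$, and consequently $2\lambda+1 = Z_W$.

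In the generic case $\lambda > 0$, define
\[
Q_1 = \frac{W_+}{\lambda+1}, \qquad Q_2 = \frac{W_-}{\lambda}.
\]
Both are non-negative and integrate to one, and $(\lambda+1)Q_1 - \lambda Q_2 = W_+ - W_- = W$ pointwise. The key steps, in order, are:
\begin{enumerate}
\item verify integrability of $W_\pm$;
\item compute $\int W_+ = 1+\lambda$ from normalization;
\item define $Q_1, Q_2$ as above;
\item check the identity.
\end{enumerate}

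The only real obstacle is the degenerate case $\lambda = 0$, where $W \ge 0$ almost everywhere and $W_-/\lambda$ is undefined. Here I would take $Q_1 = W$, which is a normalized probability density because $W\ge0$ and $\int W = 1$, and let $Q_2$ be any normalized density, for example a unit Gaussian. The term $\lambda Q_2$ then vanishes and the identity holds.

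It is also worth remarking that the decomposition is not unique. Adding any non-negative integrable $g$ to both $W_+$ and $W_-$ and renormalizing gives another valid representation with a larger $\lambda$. The choice above is the minimal one, in the sense that it has the smallest $\lambda$, and it is the one that identifies $2\lambda+1$ with the Wigner $L_1$ norm. This non-uniqueness explains why the trainable parametrization with a free non-negative $\lambda$ is still expressive enough.
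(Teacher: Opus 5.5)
Your proof is correct and matches the paper's argument: the paper also takes $P_1=\max(W,0)$ and $P_2=\max(-W,0)$, sets $\lambda=\int P_2$, and normalizes each part to obtain $Q_1$ and $Q_2$. Your separate treatment of the degenerate case $\lambda=0$, where the paper's $Q_2=P_2/\int P_2$ is undefined, is a small but real improvement on the paper's write-up.
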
\textbf{Proof}: Let $P_1 = \max(W,0)$ and $P_2 = \max(-W,0)$. By definition, $P_1$ and $P_2$ are non-negative and $\int P_1 = \int P_2 + 1$, since $\int W=1$. Now let \[Q_1 = \frac{P_1}{\int P_1}\] and \[Q_2 = \frac{P_2}{\int P_2},\] and set $\lambda = \int P_2$. We have
\begin{align*}
    (\lambda + 1) Q_1 - \lambda Q_2=&\left(\int P_2 + 1\right) Q_1 - Q_2 \int P_2\\
    =&Q_1\int P_1 - Q_2 \int P_2\\
    =&P_1 - P_2\\
    =&\max(W,0) - \max(-W,0)\\
    =&W,
\end{align*}
as desired.

Note: We may still want a stronger proof that if $W$ is infinitely differentiable we can find $Q_1$ and $Q_2$ that are infinitely differentiable.

When fitting a Wigner function $W_\theta = (\lambda+1)Q_1-\lambda Q_2$, one possible loss function is the $L_1$ loss \begin{align*}
    L_1[W_\theta,W] &= \int |W_\theta - W|\\
    &=\int dx\, dp\, P(x,p)\frac{|W_\theta(x,p) - W(x,p)|}{P(x,p)}\\
    &\approx \frac{1}{N}\sum_{(x,p)\sim P}\frac{|W_\theta(x,p) - W(x,p)|}{P(x,p)}\\
\end{align*} for any probability distribution $P$. Similarly, the (perhaps more important) gradient of the loss can be written as \begin{align*}
    \nabla_\theta L_1[W_\theta,W] &= \int \text{sign}(W_\theta - W) \nabla_\theta W_\theta\\
    &=\int dx\, dp\, P(x,p)\left[\frac{\text{sign}(W_\theta(x,p) - W(x,p))}{P(x,p)}\cdot \nabla_\theta W_\theta(x,p)\right]\\
    &\approx \frac{1}{N}\sum_{(x,p)\sim P}\left[\frac{\text{sign}(W_\theta(x,p) - W(x,p))}{P(x,p)}\cdot \nabla_\theta W_\theta(x,p)\right].
\end{align*} We now propose several options for $P$. \begin{enumerate}
    \item If we choose to set $P = |\widetilde{W}_\theta|$, we get \begin{align*}
        L_1[W_\theta,W] &\approx \frac{1}{N}\sum_{(x,p)\sim |\widetilde{W}_\theta|} \frac{|W_\theta(x,p) - W(x,p)|}{|\widetilde{W}_\theta(x,p)|}\\
        &= \frac{Z_{W_\theta}}{N}\sum_{(x,p)\sim |\widetilde{W}_\theta|} \frac{|W_\theta(x,p) - W(x,p)|}{|W_\theta (x,p)|}\\
        &= \frac{Z_{W_\theta}}{N}\sum_{(x,p)\sim |\widetilde{W}_\theta|} \left|1 - \frac{W(x,p)}{W_\theta (x,p)}\right|
    \end{align*} and \begin{align*}
        \nabla_\theta L_1[W_\theta,W] &\approx \frac{1}{N}\sum_{(x,p)\sim |\widetilde{W}_\theta|}\left[\frac{\text{sign}(W_\theta(x,p) - W(x,p))}{|\widetilde{W}_\theta(x,p)|}\cdot \nabla_\theta W_\theta(x,p)\right]\\
        &= \frac{Z_{W_\theta}}{N}\sum_{(x,p)\sim |\widetilde{W}_\theta|}\left[\frac{\text{sign}(W_\theta(x,p) - W(x,p))}{|W_\theta(x,p)|}\cdot \nabla_\theta W_\theta(x,p)\right].
    \end{align*} 
    
    For this option, we sample from $|W_\theta (x,p)|$ using MCMC methods. Because $Z_{W_\theta}$ factors out and only controls the magnitude of the gradient, we can in principle drop this term. This is especially true because in practice we use special optimizers such as Adam which rescale the gradients anyway. However, if we want to evaluate $Z_{W_\theta}$, we can do so through MCMC methods. We may wish to do so, because $Z_{W_\theta}$ can change as $W_\theta$ changes. It should, however, be bounded by $2\lambda + 1$.

    \item If we choose to set $P = |\widetilde{W}|$, we get \begin{align*}
        L_1[W_\theta,W] &\approx \frac{1}{N}\sum_{(x,p)\sim |\widetilde{W}|} \frac{|W_\theta(x,p) - W(x,p)|}{|\widetilde{W}|}\\
        &= \frac{Z_{W}}{N}\sum_{(x,p)\sim |\widetilde{W}|} \frac{|W_\theta(x,p) - W(x,p)|}{|W(x,p)|}\\
        &= \frac{Z_{W_\theta}}{N}\sum_{(x,p)\sim |\widetilde{W}|} \left|\frac{W_\theta (x,p)}{W(x,p)} - 1\right|
    \end{align*} and \begin{align*}
        \nabla_\theta L_1[W_\theta,W] &\approx \frac{1}{N}\sum_{(x,p)\sim |\widetilde{W}|}\left[\frac{\text{sign}(W_\theta(x,p) - W(x,p))}{|\widetilde{W}(x,p)|}\cdot \nabla_\theta W_\theta(x,p)\right]\\
        &= \frac{Z_{W}}{N}\sum_{(x,p)\sim |\widetilde{W}|}\left[\frac{\text{sign}(W_\theta(x,p) - W(x,p))}{|W(x,p)|}\cdot \nabla_\theta W_\theta(x,p)\right].
    \end{align*} Alternatively, in this case we can just use autodiff directly on $L_1[W_\theta,W]$ to compute $\nabla_\theta L_1[W_\theta,W]$. 
    
    For this option, we sample from $|W(x,p)|$ using MCMC methods. Because $Z_{W}$ factors out and only controls the magnitude of the gradient, we can drop this term. This is especially true because in practice we use special optimizers such as Adam which rescale the gradients anyway. However, if we want to evaluate $Z_{W_\theta}$, we can do so through MCMC methods.

    \item If we choose to set $P = \frac{1}{2}(Q_{1\theta}+Q_{2\theta})$, we get \begin{align*}
        L_1[W_\theta,W] &\approx \frac{1}{N}\sum_{(x,p)\sim \frac{1}{2}(Q_{1\theta}+Q_{2\theta})} \frac{|W_\theta(x,p) - W(x,p)|}{\frac{1}{2}(Q_{1\theta}+Q_{2\theta})}
    \end{align*} and \begin{align*}
        \nabla_\theta L_1[W_\theta,W] &\approx \frac{1}{N}\sum_{(x,p)\sim \frac{1}{2}(Q_{1\theta}+Q_{2\theta})}\left[\frac{\text{sign}(W_\theta(x,p) - W(x,p))}{\frac{1}{2}(Q_{1\theta}+Q_{2\theta})}\cdot \nabla_\theta W_\theta(x,p)\right].
    \end{align*} This method does not require MCMC methods for sampling or evaluating $Z_W.$ However, it has the downside that the samples are not taken from either Wigner function, so each sample may provide less information than in the other proposals. In this case, we could once again differentiate straight through $L_1[W_\theta,W]$ using autodiff and the reparametrization trick.
\end{enumerate}

Another possible loss is the density matrix overlap $1-\text{tr}(\rho(W_\theta)\rho(W))$ where $\rho(\cdot)$ represents the density matrix for a given $W$. In terms of Wigner functions, this can be written as \begin{align*}
    1-\text{tr}(\rho(W_\theta)\rho(W)) &= 1-2\pi \int dx\, dp\, W_\theta(x,p)W(x,p)\\
    &= 1-2\pi \int dx\, dp\, P(x,p)\frac{W_\theta(x,p)W(x,p)}{P(x,p)}\\
    &\approx 1-\frac{2\pi}{N}\sum_{(x,p)\sim P}\frac{W_\theta(x,p)W(x,p)}{P(x,p)}.
\end{align*} The gradient of this loss can be written as \begin{align*}
    \nabla_\theta(1-\text{tr}(\rho(W_\theta)\rho(W))) &= -2\pi \int dx\, dp\, W(x,p) \nabla_\theta W_\theta(x,p)\\
    &= -2\pi \int dx\, dp\, P(x,p)\frac{W(x,p)}{P(x,p)}\cdot \nabla_\theta W_\theta(x,p)\\
    &\approx -\frac{2\pi}{N}\sum_{(x,p)\sim P}\frac{W(x,p)}{P(x,p)}\cdot \nabla_\theta W_\theta(x,p).
\end{align*} We now propose several options for $P$. \begin{enumerate}
    \item If we choose to set $P = |\widetilde{W}_\theta|$, we get \begin{align*}
        1-\text{tr}(\rho(W_\theta)\rho(W)) &\approx 1-\frac{2\pi}{N}\sum_{(x,p)\sim |\widetilde{W}_\theta|}\frac{W_\theta(x,p)W(x,p)}{|\widetilde{W}_\theta(x,p)|}\\
        &= 1-\frac{2\pi Z_{W_\theta}}{N}\sum_{(x,p)\sim |\widetilde{W}_\theta|} W(x,p)\cdot \text{sign}(W_\theta(x,p))
    \end{align*} and \begin{align*}
        \nabla_\theta (1-\text{tr}(\rho(W_\theta)\rho(W))) &\approx -\frac{2\pi}{N}\sum_{(x,p)\sim |\widetilde{W}_\theta|}\frac{W(x,p)}{|\widetilde{W}_\theta(x,p)|}\cdot \nabla_\theta W_\theta(x,p)\\
        &= -\frac{2\pi Z_{W_\theta}}{N}\sum_{(x,p)\sim |\widetilde{W}_\theta|}\frac{W(x,p)}{|W_\theta(x,p)|}\cdot \nabla_\theta W_\theta(x,p).
    \end{align*} 

    \item If we choose to set $P = |\widetilde{W}|$, we get \begin{align*}
        1-\text{tr}(\rho(W_\theta)\rho(W)) &\approx 1-\frac{2\pi}{N}\sum_{(x,p)\sim |\widetilde{W}|}\frac{W_\theta(x,p)W(x,p)}{|\widetilde{W}(x,p)|}\\
        &= 1-\frac{2\pi Z_W}{N}\sum_{(x,p)\sim |\widetilde{W}|} W_\theta(x,p)\cdot \text{sign}(W(x,p))
    \end{align*} and \begin{align*}
        \nabla_\theta (1-\text{tr}(\rho(W_\theta)\rho(W))) &\approx -\frac{2\pi}{N}\sum_{(x,p)\sim |\widetilde{W}|}\frac{W(x,p)}{|\widetilde{W}(x,p)|}\cdot \nabla_\theta W_\theta(x,p)\\
        &= -\frac{2\pi Z_W}{N}\sum_{(x,p)\sim |\widetilde{W}|} \text{sign}(W_\theta(x,p))\nabla_\theta W_\theta(x,p).
    \end{align*}  Alternatively, in this case we can just use autodiff directly on $1-\text{tr}(\rho(W_\theta)\rho(W))$ to compute $\nabla_\theta (1-\text{tr}(\rho(W_\theta)\rho(W)))$.

    \item If we choose to set $P = \frac{1}{2}(Q_{1\theta}+Q_{2\theta})$, we get \begin{align*}
        1-\text{tr}(\rho(W_\theta)\rho(W)) &\approx 1-\frac{2\pi}{N}\sum_{(x,p)\sim \frac{1}{2}(Q_{1\theta}+Q_{2\theta})}\frac{W_\theta(x,p)W(x,p)}{\frac{1}{2}(Q_{1\theta}+Q_{2\theta})}
    \end{align*} and \begin{align*}
        \nabla_\theta (1-\text{tr}(\rho(W_\theta)\rho(W))) &\approx -\frac{2\pi}{N}\sum_{(x,p)\sim \frac{1}{2}(Q_{1\theta}+Q_{2\theta})}\frac{W(x,p)}{\frac{1}{2}(Q_{1\theta}+Q_{2\theta})}\cdot \nabla_\theta W_\theta(x,p).
    \end{align*}  This method does not require MCMC methods for sampling or evaluating $Z_W$. However, it has the downside that the samples are not taken from either Wigner function, so each sample may provide less information than in the other proposals. In this case, we could once again differentiate straight through $\text{tr}(\rho(W_\theta)\rho(W))$ using autodiff and the reparametrization trick.
\end{enumerate}

\begin{figure*}
    \centering
    \includegraphics[width=\linewidth]{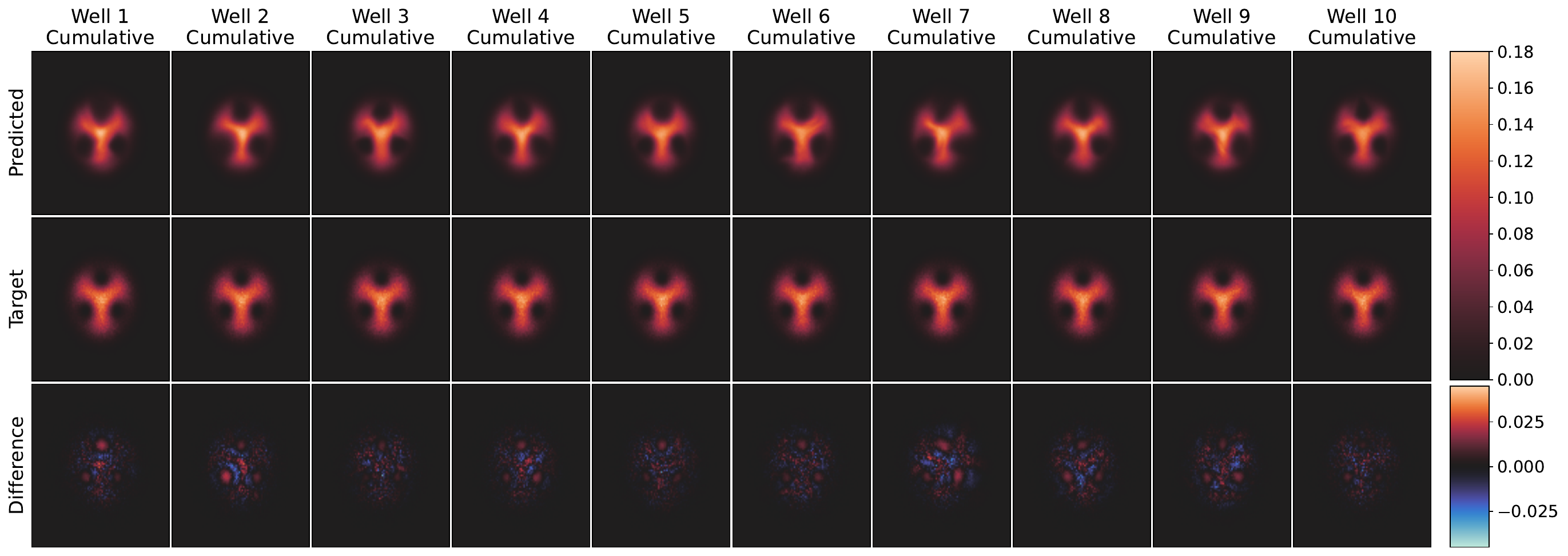}
    \caption{$Q$-function tomography of a 10-mode continuous-variable quantum state.}
    \label{fig:10_well}
\end{figure*}

\subsection{Definitions of Quantum States}\label{app: state definitions}

Throughout this paper, we use the notation $\ket{\cdot}_\alpha$ to denote that the ket is parametrized in terms of coherent states and the notation $\ket{\cdot}_n$ to denote that the ket is parametrized in terms of Fock states.

Now, we make the following definitions:

\begin{align*}
    \ket{\text{cat}(\alpha)} &= \frac{1}{\sqrt{2+2e^{-2|\alpha|^2}}}\left(\ket{\alpha}_\alpha - \ket{-\alpha}_\alpha\right)\\
    \ket{\text{num}(s)} &= \sqrt{\frac{8-(-1)^{s}-\sqrt{17}}{6}}\ket{s}_n + \sqrt{\frac{\sqrt{17}-2+(-1)^{s}}{6}}\ket{3+s}_n\\
    \ket{\text{binom}(N,S)} &= \sum_{m=0}^{N+1} \sqrt{\frac{\binom{N+1}{m}}{2^{N+1}}}\ket{(S+1)m}_n\\
    \ket{\text{GKP}(\delta, C)} &= \sum_{-C\le x,y\le C} e^{-\left[-\frac{\pi}{2}\left(\delta^2 [4x^2+y^2] -i 2xy\right)\right]}\left|(2x+iy)\sqrt{\frac{\pi}{2}}\right\rangle_\alpha.
\end{align*}

\section{More Experimental Details} \label{app: exp}
\paragraph{Common Experimental Details:} In this section, we lay out the default setup for all experiments. All defaults presented here hold for each experiment in this paper unless specifically indicated in the relevant section below.

All experiments are implemented in Python using libraries such as Jax \cite{jax2018github}, Diffrax \cite{diffrax}, Equinox \cite{kidger2021equinox}, Optax \cite{deepmind2020jax}, and Haiku \cite{haiku2020github}.

We use a Continuous Normalizing Flow \cite{grathwohl2018ffjordfreeformcontinuousdynamics} for all experiments. In particular, we represent each flow as a mapping $f_\theta(x,t)$ and define samples $X$ from the flow according to the process\begin{align*}
    X = x(t=0.5)\,\odot e^{-s} \odot r, \quad \quad \frac{dx}{dt} = f_\theta(x,t), \quad \quad x(t=0) \sim \mathcal{N}(0,1),
\end{align*}
where $\odot$ denotes the elementwise product of two vectors.
In the above, $s$ is a trainable vector parameter (part of $\theta$) and $r$ is a non-trainable vector parameter provided at initialization (we initialize it to a sample-based estimate of the standard deviation of the target distribution, as discussed below).
We can similarly represent the probability distribution represented by the flow as the solution to the ordinary differential equations
\begin{align*}
    p_\theta(X) &= p_\theta(X\, e^s/r,t=0.5)e^{\sum_i s_i - \log r_i},\\
    \frac{d\log p_\theta(x,t)}{dt} &= -\text{Tr}\left[\frac{\partial f_\theta(x,t)}{\partial x}\right]_{x=x(t)},\\
    p_\theta(x,t=0) &= (2\pi)^{-\text{dim}(x)/2}\exp\left[-\frac{|x|^2}{2}\right].
\end{align*}

Similar to FFJORD \cite{grathwohl2018ffjordfreeformcontinuousdynamics}, we parametrize $f_\theta$ as follows:
\begin{align*}
    f_\theta(x,t) &= h_L(x,t),\\
    h_i(x,t) &= \xi\left(\left(W_i^{(1)} h_{i-1} + b_i^{(1)}\right)\odot \sigma\left(w_i^{(2)}t + b_i^{(2)}\right) + w_i^{(3)}t\right) + h_{i-1},\\
    h_0 &= x,
\end{align*}
where $\sigma$ represents the sigmoid function, $\xi$ is another activation, and $L$ is the network depth. The dimensions of the objects above are
\begin{align*}
    W_1^{(1)} &\in \mathbb{R}^{N\times N_{in}},\\
    W_i^{(1)} &\in \mathbb{R}^{N\times N}, \quad\quad 1<i<L\\
    W_L^{(1)} &\in \mathbb{R}^{N_{in}\times N},\\
    w_i^{(2)},w_i^{(3)},b_i^{(1)},b_i^{(2)} &\in \mathbb{R}^{N},\quad\quad i<L\\
    w_L^{(2)},w_L^{(3)},b_L^{(1)},b_L^{(2)} &\in \mathbb{R},
\end{align*}
where $N_{in}$ is the dimension of $x$ and $N$ is the hidden dimension. Depending on the task, we set $\xi(x)$ to one of $\sin(30x)$ or $\text{GELU}(x)$. We initialize the weights according to
\begin{align*}
    w_1^{(1)}&\sim \begin{cases}
        \mathcal{U}\left(-\frac{1}{30}\sqrt{\frac{6}{N_{in}}},\; \frac{1}{30}\sqrt{\frac{6}{N_{in}}}\right) &\text{if }\xi(x) =\sin(30x)\\
        T\left[\mathcal{N}\left(-\frac{1}{\sqrt{N_{in}}},\frac{1}{\sqrt{N_{in}}}\right)\right]&\text{if }\xi(x) =\text{GELU}(x),\\
    \end{cases}\\
    w_i^{(1)}&\sim \begin{cases}
        \mathcal{U}\left(-\frac{1}{N},\; \frac{1}{N}\right) &\text{if }\xi(x) =\sin(30x)\\
        T\left[\mathcal{N}\left(-\frac{1}{\sqrt{N}},\frac{1}{\sqrt{N}}\right)\right] &\text{if }\xi(x) =\text{GELU}(x),\\
    \end{cases}\quad\quad\quad\quad \quad\quad\quad\quad i>1\\
    w_i^{(2)}&\sim T\left[\mathcal{N}\left(0,1\right)\right],\\
    w_i^{(3)}&\sim T\left[\mathcal{N}\left(0,1\right)\right],\\
    b_i^{(1)}&= 0,\\
    b_i^{(2)}&= 0,
\end{align*}
where $T$ represents the truncation of the random variable to the range $[-2,2]$.

By default, we use $\xi(x)=\sin(30x)$, $N=30$, and $L=5$. We solve the forward and reverse ODEs using Tsitouras' 5/4 method with an adaptive step size controlled by a PID controller with default step size 0.1 and both relative and absolute tolerance set to $10^{-4}$. We backpropagate through the solver to compute gradients.

We use the learning rate schedule
\begin{equation*}
    \text{lr} = \begin{cases}
        \frac{E\text{lr}_{max}}{wE_{max}} & \text{if } E < wE_{max}\\
        \frac{1}{2}\left[1+\cos\left(2\pi\frac{E - wE_{max}}{(1-w)E_{max}}\right)\right] & \text{if } wE_{max}\le E \le E_{max},
    \end{cases}
\end{equation*}
where $E_{max}$ is the total number of steps, $E$ is the current step, and $w$ is the percentage of steps that are used for warmup. By default, $E_{max} = 500$ and $w=0.1$.

We train using AdamW \cite{adamw} with default learning rate $4\cdot 10^{-3}$, betas $(\beta_1,\beta_2) = (0.9, 0.999)$, and a weight decay of 0.0001. We also clip gradients larger than 100.

We add 100 new samples from the current model to the sample bank every 10 epochs. For $Q$ functions, we sample directly from $Q_\theta$. For Wigner functions with $W_\theta = (\lambda+1) Q_\theta^{(1)} - \lambda Q_\theta^{(2)}$, we use the positive proposal $\frac{1}{2}\left(Q_\theta^{(1)} + Q_\theta^{(2)}\right)$. For each training step, we select $N_s$ samples uniformly from the sample bank for training using the losses described in the main text.

\textbf{Figure 2 Experimental Details:} 
This experiment closely follows the default parameters, with the following exceptions:
\begin{itemize}
    \item For the Q-Function GKP state and the W-Function Binomial State, we set $E_{max}=1000.$
    \item For the W-Function GKP state and Fock state, we set $E_{max}=4000.$
\end{itemize}

\textbf{Figure 3 Experimental Details:} This experiment closely follows the default parameters, with the following exceptions:
\begin{itemize}
    \item We do not use the sample bank. Instead we train the model on every point in the training dataset (which is inside the circle described in \cite{QST_CGAN}) each training step. We use the L1  between the sample values and the $W_\theta$ predictions for training.
    \item We use $E_{max}=1000.$
    \item We use $\xi(x) = \text{GELU}(x).$
\end{itemize}

\textbf{Figure 4 Experimental Details:} This experiment closely follows the default parameters, with the following exceptions:
\begin{itemize}
    \item We use $E_{max}=8000$. 
    \item We use $N_s=600.$
    \item We use $\xi(x) = \text{GELU}(x).$
    \item We use $N=80.$
\end{itemize}

\textbf{Figure S1 Experimental Details:} This experiment closely follows the default parameters, with the following exceptions:
\begin{itemize}
    \item We use $E_{max}=8000$. 
    \item We use $N_s=500.$
    \item We use $\xi(x) = \text{GELU}(x).$
    \item We use $N=80.$
\end{itemize}

\end{document}